\documentclass{ecai} 

\usepackage{latexsym}
\usepackage{amssymb}
\usepackage{amsmath}
\usepackage{amsthm}
\usepackage{booktabs}
\usepackage{enumitem}
\usepackage{graphicx}
\usepackage{color}
\usepackage[ruled,vlined]{algorithm2e} 

\newtheorem{theorem}{Theorem}

\newtheorem{proposition}[theorem]{Proposition}

\newtheorem{definition}{Definition}

\newcommand{\BibTeX}{B\kern-.05em{\sc i\kern-.025em b}\kern-.08em\TeX}

\begin{document}


\begin{frontmatter}


\paperid{4783} 


\title{EcoFair-CH-MARL: Scalable Constrained Hierarchical Multi-Agent RL with Real-Time Emission Budgets and Fairness Guarantees}


\author{\fnms{Saad}~\snm{Alqithami}\orcid{0000-0002-2111-3456}\thanks{Corresponding Author. Email: salqithami@bu.edu.sa}}

\address{Computer Science Department, Al-Baha University, Albaha 65779, Saudi Arabia}


\begin{abstract}
Global decarbonisation targets and tightening market pressures demand maritime logistics solutions that are simultaneously efficient, sustainable, and equitable. We introduce EcoFair-CH-MARL, a constrained hierarchical multi-agent reinforcement learning framework that unifies three innovations: (i) a primal--dual budget layer that provably bounds cumulative emissions under stochastic weather and demand; (ii) a fairness-aware reward transformer with dynamically scheduled penalties that enforces max--min cost equity across heterogeneous fleets; and (iii) a two-tier policy architecture that decouples strategic routing from real-time vessel control, enabling linear scaling in agent count. New theoretical results establish $\mathcal{O}(\sqrt{T})$ regret for both constraint violations and fairness loss. Experiments on a high-fidelity maritime digital twin (16 ports, 50 vessels) driven by automatic identification system traces, plus an energy-grid case study, show up to 15\% lower emissions, 12\% higher throughput, and a 45\% fair-cost improvement over state-of-the-art hierarchical and constrained MARL baselines. In addition, EcoFair-CH-MARL achieves stronger equity (lower Gini and higher min--max welfare) than fairness-specific MARL baselines (e.g., SOTO, FEN), and its modular design is compatible with both policy- and value-based learners. EcoFair-CH-MARL therefore advances the feasibility of large-scale, regulation-compliant, and socially responsible multi-agent coordination in safety-critical domains.
\end{abstract}

\end{frontmatter}


\section{Introduction} \label{introduction}
The advent of globalised trade has driven unprecedented growth in the volume and complexity of maritime logistics. As one of the most cost‑effective modes of transportation, maritime shipping is indispensable for connecting economies and sustaining international commerce. This growth, however, brings substantial environmental and operational challenges. Owing to heavy reliance on fossil fuels, the sector contributes a material share of global greenhouse gas (GHG) emissions ($\approx$2.89\%) \cite{smith2014third,imo2020}. In response, the International Maritime Organization (IMO) has adopted a strategy to reduce GHG emissions from international shipping by at least 50\% by 2050 relative to 2008, with the longer‑term ambition of full decarbonisation \cite{imo2018strategy}. These targets, together with tightening market pressures and recurrent port congestion, elevate the need for coordinated, sustainable, and equitable decision‑making at fleet scale.

Environmental pressures are compounded by the coordination problem among heterogeneous stakeholders (shipping lines, port authorities, and regulators) with different objectives and constraints. Maritime operations are inherently multi‑actor and multi‑objective: one must simultaneously reduce fuel consumption and emissions, preserve throughput, and maintain equitable cost sharing—all under uncertainty in weather, demand, and berth availability. Achieving these goals requires methods that handle global constraints, couple short‑horizon control with long‑horizon planning, and make fairness a first‑class objective rather than a post‑hoc diagnostic.

Multi‑agent reinforcement learning (MARL) provides a principled framework for distributed decision‑making in dynamic systems \cite{hernandez2019survey}. Agents learn from interaction, adapt to non‑stationarity, and can coordinate implicitly through reward shaping or explicitly via hierarchical policies. However, directly integrating regulatory constraints (e.g., emission budgets) and equity objectives (e.g., fair cost allocation) into MARL remains challenging. Typical approaches address only local optimisation, treat constraints heuristically, or lack fairness guarantees—leading to brittle behaviour under caps and inequitable outcomes when vessels differ in capabilities or routes. In addition, flat MARL architectures can struggle to scale as the number of agents and ports grows, especially when real‑time control and strategic routing must be co‑optimised.

Maritime logistics underpins roughly 80\% of global trade \cite{imo2020}, making sustainable optimisation an economic and environmental imperative. Rising volumes coincide with stronger regulatory and societal demands for cleaner shipping. Stakeholders therefore need integrated mechanisms that (i) respect global emission budgets, (ii) allocate costs fairly across heterogeneous fleets, and (iii) preserve or improve throughput in the face of congestion and weather. These requirements motivate a constrained, hierarchical, and fairness‑aware MARL formulation that is both theoretically grounded and practically scalable.

We study multi‑vessel coordination across a network of ports under global emissions constraints and fairness objectives. The environment is stochastic (storms, swells), partially observable, and multi‑agent. The planner must choose actions that minimise long‑run cost (fuel/emissions) subject to a cumulative emissions budget while promoting equitable outcomes across vessels. We quantify equity via Gini and min--max fairness on per‑vessel costs. Concretely, for per‑episode costs $\{c_i\}_{i=1}^n$ with sorted values $c_{(1)} \le \cdots \le c_{(n)}$, we track
\[
\mathrm{Gini}(\mathbf{c}) = \frac{2 \sum_{i=1}^{n} i\,c_{(i)}}{n \sum_{i=1}^{n} c_{(i)}} - \frac{n+1}{n}
\; \text{and}\;
\mathrm{MinMax}(\mathbf{c}) \;=\; \frac{\min_i c_i}{\max_i c_i}\,,
\]
desiring lower Gini and higher MinMax. The challenge is to enforce global constraints and equity at scale, despite non‑stationarity and heterogeneous vessel dynamics.

We introduce \emph{EcoFair‑CH‑MARL}, a constrained hierarchical MARL framework that unifies three innovations. First, a lightweight primal--dual emissions budget layer maintains a global dual variable and applies budget prices to agents’ rewards, provably bounding cumulative emissions under stochastic weather and demand, so we establish $\mathcal{O}(\sqrt{T})$ regret for constraint violations. Second, a fairness‑aware reward transformer augments per‑agent rewards with equity terms derived from Gini and min--max, with penalties dynamically scheduled over training to avoid early collapse to trivial equalisation and to stabilise learning in heterogeneous fleets; we further provide $\mathcal{O}(\sqrt{T})$ regret for fairness loss. Third, a two‑tier policy architecture decouples high‑level strategic routing/port selection from low‑level real‑time vessel control, yielding linear scaling in agent count and improved sample efficiency. The design is algorithm‑agnostic: we instantiate it with PPO and also integrate MAPPO and QMIX without modifying the emissions or fairness layers.

Unless otherwise specified, all maritime experiments operate at the camera‑ready scale of 16 ports and 50 vessels, the configuration confirmed during rebuttal and adopted throughout. We employ a high‑fidelity maritime digital twin driven by Automatic Identification System (AIS) traces for realism and include a cross‑domain energy‑grid case study to demonstrate generality. Against state‑of‑the‑art hierarchical and constrained MARL baselines, EcoFair‑CH‑MARL achieves up to 15\% lower emissions, 12\% higher throughput, and a 45\% improvement in fair‑cost. It also delivers stronger equity (lower Gini, higher min--max) than fairness‑specific MARL baselines such as SOTO and FEN, while respecting emission budgets. A hierarchical convergence study—in which a high‑level allocator adapts the cap based on fairness feedback—further corroborates stability and scalability at the 16/50 setting.

\textbf{Contributions.} This paper presents a \emph{Constrained Hierarchical Multi‑agent Reinforcement Learning} (CH‑MARL) framework for sustainable maritime logistics, with the following contributions:
\begin{enumerate}
  \item \textbf{Dynamic, provable constraint enforcement:} A primal--dual budget layer that enforces cumulative emission caps under uncertainty, with $\mathcal{O}(\sqrt{T})$ regret on violations.
  \item \textbf{Fairness‑aware optimisation with scheduled penalties:} A reward transformer that embeds Gini and min--max equity objectives, with scheduled penalties to avoid degenerate equalisation; we provide $\mathcal{O}(\sqrt{T})$ regret on fairness loss.
  \item \textbf{Hierarchical, algorithm‑agnostic architecture:} A two‑tier policy that decouples strategic routing from real‑time control, scaling linearly in agent count and integrating seamlessly with policy‑ and value‑based learners (PPO, MAPPO, QMIX).
  \item \textbf{Validated at realistic scale with open artefacts:} End‑to‑end experiments on a 16‑port/50‑vessel digital twin (AIS‑driven), comparisons to SOTO/FEN and other baselines, explicit fairness reporting (per‑episode Gini and min--max), and a hierarchical convergence analysis.
\end{enumerate}
By jointly addressing operational efficiency, environmental sustainability, and stakeholder equity without sacrificing scalability, \emph{EcoFair‑CH‑MARL} advances the feasibility of regulation‑compliant, socially responsible coordination in safety‑critical maritime domains.

\section{Background and Related Work}\label{sec:bg}

Continuous advances in artificial intelligence have brought {\em multi-agent} settings to the fore, where several autonomous entities must learn, cooperate, and occasionally compete in shared, partially observable environments. In this section we review the four strands of research that underpin our work: Multi-Agent Reinforcement Learning (MARL), Constrained Reinforcement Learning (CRL), applications to maritime logistics, and fairness in multi-agent decision making.

\paragraph{Multi-Agent Reinforcement Learning:}
MARL studies the interaction of~$N$ simultaneously learning agents whose decisions jointly shape future rewards and state transitions. Early efforts distinguished fully co-operative tasks—e.g.\ search-and-rescue robotics—\cite{panait2005cooperative,foerster2018learning} from competitive or mixed-motivation settings such as electronic markets and zero-sum games \cite{lowe2017multi,silver2017mastering}.  
Three systemic challenges persist:
1. \textbf{Scalability}: the joint state–action space grows exponentially with~$N$ \cite{hernandez2019survey}.  Parameter sharing and decentralised training with centralised execution (DTCE/CTDE) mitigate that growth for homogeneous teams \cite{oliehoek2008optimal,gupta2017cooperative,lowe2017multi}. In cooperative MARL, value factorisation further improves scalability by decomposing the joint action-value into per‑agent terms while preserving optimality structure, e.g., Value‑Decomposition Networks (VDN) and QMIX \cite{sunehag2018vdn,rashid2018qmix}.
2. \textbf{Partial observability}: each agent sees only a local view, motivating recurrent policies and learned communication protocols \cite{hausknecht2015deep,foerster2016learning,sukhbaatar2016learning}. CTDE also enables training critics on global state while executing decentralised policies \cite{lowe2017multi,foerster2018learning}.
3. \textbf{Non-stationarity}: the learning environment shifts as co-players update their policies.  Opponent modelling and equilibrium learning aim to stabilise convergence \cite{albrecht2018autonomous,shou2022multi}.

\paragraph{Constrained Reinforcement Learning:}
CRL augments the standard RL objective with hard or soft constraints that capture safety, risk, or resource limits. Classical treatments cast constrained Markov decision processes (CMDPs) and use Lagrangian relaxation to obtain a dual form \cite{altman1999constrained}. Constrained Policy Optimisation (CPO) brings trust‑region updates with empirical constraint satisfaction guarantees \cite{achiam2017constrained}. Beyond CPO, \emph{Lyapunov‑based} CMDP methods enforce safety via local linear constraints \cite{chow2018lyapunov}, and \emph{Reward‑Constrained Policy Optimisation} (RCPO) introduces multi‑timescale penalties for constraint satisfaction \cite{tessler2019rcpo}. Recent primal–dual advances provide convergence results in policy‑gradient settings, e.g., natural policy‑gradient primal–dual methods and zero‑duality‑gap results that justify dual optimisation for CMDPs \cite{ding2020npgpd,paternain2019zeroduality}. While much of this literature is single‑agent, scaling to multi‑agent CMDPs with \emph{global} coupling constraints remains challenging and motivates our hierarchical, budget‑layer design.

\paragraph{MARL in Maritime Logistics:}
Maritime logistics—covering vessel routing, port scheduling, and fleet dispatch—presents an archetypal large‑scale MARL domain. Prior work demonstrates the potential of hierarchical MARL for maritime traffic management at port scale, with centralised learning and decentralised execution to handle partial observability \cite{singh2020maritime}. At the operations layer, deep RL has been applied to berth allocation under uncertainty and to multi‑terminal berth/craning decisions, though typically without explicit global emissions regulation or fairness \cite{lv2024berth,li2023ddqnberth}. These strands motivate a framework that (i) enforces emission budgets \emph{online}, (ii) reports fairness explicitly, and (iii) scales beyond single‑port or toy‑fleet scenarios.

\paragraph{Fairness in Multi‑Agent Systems:}
Resource‑allocation decisions made by learning agents can systematically disfavour smaller stakeholders unless explicit equity mechanisms are in place. Metrics such as max‑min fairness, envy‑freeness, or the Gini coefficient provide quantitative handles \cite{nash1950bargaining,lipton2004approximately}. In sequential decision making, foundational work formalised fairness in RL and proposed algorithms that account for feedback dynamics \cite{jabbari2017fairrl,wen2021seqfair}. Behavioural and prosocial learning studies show that shaping individual objectives can substantially improve collective outcomes in social dilemmas \cite{peysakhovich2018prosocial}. Incorporating these notions into RL has improved traffic‑signal control and cloud‑resource sharing \cite{aloor2024cooperation}, yet reconciling fairness with efficiency in stochastic, multi‑constraint environments remains challenging \cite{rawls1971theory}. Maritime operations amplify the issue: smaller shipping lines can be crowded out of berth queues or emission budgets, reducing both economic viability and overall system performance.

A synthesis of hierarchical MARL, real‑time constraint enforcement, and algorithmic fairness is still missing for large‑scale maritime logistics. Our work addresses that gap by unifying these strands into a single constrained, fairness‑aware hierarchical MARL framework, evaluated on a high‑fidelity digital twin that reflects modern IMO regulations.

\section{Problem Formulation}\label{sec:problem_formulation}
We cast sustainable maritime logistics as a \emph{constrained, hierarchical, partially observable multi-agent problem}.  After sketching the physical system—its ports, vessels, and stochastic disturbances—we explain the two-tier agent hierarchy, spell out state, action, and reward definitions, formalise binding global constraints, and introduce explicit fairness criteria.  The section closes with the modelling assumptions that keep the problem tractable.

\paragraph{System Description}\label{subsec:system-description}
The maritime network is represented by a directed graph \(\mathcal{G}=(\mathcal{P},\mathcal{R})\), where \(\mathcal{P}\) is the set of ports and \(\mathcal{R}\) the admissible sea lanes.  Each port \(p\!\in\!\mathcal{P}\) offers a finite number of berths \(C_p^{\text{berth}}\) and cranes \(C_p^{\text{crane}}\), while vessels \(i\!\in\!\{1,\dots,N\}\) traverse routes over a horizon of \(T\) decision epochs.  Four sources of uncertainty shape operations: (i)~weather events (storms, swell) that alter speed and fuel burn; (ii)~port congestion that produces queuing and idling emissions; (iii)~optional detours that trade distance for reduced traffic; and (iv)~stochastic mechanical failures.  Because these factors evolve unpredictably and are only partially observable, no single entity has full, real-time knowledge of the global state.

Formally, the underlying process is a finite-horizon \emph{decentralised partially observable} MDP (Dec‑POMDP)
\[
\mathcal{M}=\bigl(\mathcal{S},\{\mathcal{A}_i\}_{i=1}^N,P,\{\mathcal{O}_i\}_{i=1}^N,O,\{r_i\}_{i=1}^N,T\bigr),
\]
with state \(s_t\!\in\!\mathcal{S}\), joint action \(a_t=(a_t^1,\dots,a_t^N)\!\in\!\mathcal{A}_1\times\cdots\times\mathcal{A}_N\), transition \(P(s_{t+1}\!\mid s_t,a_t,\xi_t)\) driven by exogenous disturbance \(\xi_t\) (weather), local observations \(o_t^i\!\sim\! O(\cdot\mid s_t,i)\), and per‑agent rewards \(r_t^i=r_i(s_t,a_t)\).  Emissions \(e_t=e(s_t,a_t)\ge 0\) and port occupancies \(q_{p,t}\) are measurable functions of \((s_t,a_t)\).

\paragraph{Agents and Hierarchical Roles}\label{subsec:agents-hierarchies}
Decision making is split into a strategic \emph{high-level} layer and an operational \emph{low-level} layer.  High‑level decisions occur every \(\tau_{\!H}\) steps (epochs \(k=0,1,\dots,K\!-\!1\) with \(K=\lceil T/\tau_{\!H}\rceil\)) and produce macro‑actions
\(
u_k \in \mathcal{U} = \text{(route assignments, arrival windows, per‑vessel emission budgets)}.
\)
Low‑level agents act every step \(t\) with fine‑grain controls \(a_t^i\in\mathcal{A}_i\) (speed/throttle, berth or crane requests, on‑board resource management).  The high‑level macro‑action \(u_k\) induces path/port constraints and budget envelopes for the \(\tau_{\!H}\) subsequent low‑level steps \(t\in[k\tau_{\!H},(k{+}1)\tau_{\!H}\!-\!1]\).  This two‑tier decomposition promotes scalability—each layer optimises on its natural time scale—and allows strategic policy changes without re‑training fine‑grained controllers.

\paragraph{State, Action, and Reward}\label{subsec:state-action-reward}
The global state \(s_t\) aggregates vessel positions, fuel levels, mechanical health, port queues, berth occupancy, crane allocation, weather readings, and cumulative emissions \(E_t=\sum_{u=0}^{t-1} e_u\).  Each agent \(i\) receives only a local slice \(o_t^i\subset s_t\) (e.g., own kinematics, local queue lengths, local weather).  High‑level actions \(u_k\) include route/arrival assignments and (windowed) budget allocations; low‑level actions \(a_t^i\) cover speed adjustments and berth/crane requests.

Per‑step rewards are composite,
\[
r_t^i \;=\; R_{\text{cost}}^i(s_t,a_t)\;+\;R_{\text{emission}}(s_t,a_t)\;+\;R_{\text{fair}}^i(\mathbf{c}_{0:t}),
\]
balancing fuel/time costs, greenhouse‑gas incentives, and an equity signal that penalises disproportionate burden sharing.  Here \(\mathbf{c}_{0:t}=(c_1,\dots,c_N)\) denotes accumulated per‑vessel costs up to time \(t\), with \(c_i=\sum_{u=0}^{t}\ell_i(s_u,a_u)\) for an instantaneous cost \(\ell_i\).

\paragraph{Global Constraints}\label{subsec:global-constraints}
Two hard constraints bind all agents.  First, the fleet must respect a cumulative emission cap \(B\):
\begin{equation}
\sum_{t=0}^{T-1} e(s_t,a_t)\;\le\; B. \tag{C1}
\end{equation}
Optionally, windowed caps apply on rolling windows \(\mathcal{W}_w\subseteq\{0,\dots,T{-}1\}\): \(\sum_{t\in\mathcal{W}_w} e(s_t,a_t)\le B_w\).  Second, every port enforces physical capacity limits on simultaneous berths and crane slots:
\begin{equation}
q_{p,t}^{\text{berth}} \le C_p^{\text{berth}}, \qquad
q_{p,t}^{\text{crane}} \le C_p^{\text{crane}}
\quad \forall\, p\in\mathcal{P},\, t. \tag{C2}
\end{equation}
Violations are handled \emph{online} via primal–dual penalties that propagate to agents’ rewards.  Denoting a dual variable \(\lambda_t\ge 0\) for the emissions budget and optional duals \(\mu_{p,t}\ge 0\) for capacity, the priced (shaped) reward is
\begin{equation}\label{eq:shaped-reward}
\begin{aligned}
\tilde r_t^i
  &= r_t^i - \lambda_t\,e(s_t,a_t) \\
  &\quad - \sum_{p}\mu_{p,t}\,\bigl[q_{p,t}^{\text{berth}} - C_p^{\text{berth}}\bigr]_+ \\
  &\quad - \sum_{p}\nu_{p,t}\,\bigl[q_{p,t}^{\text{crane}} - C_p^{\text{crane}}\bigr]_+ \,.
\end{aligned}
\end{equation}
with duals updated by projected subgradient steps (details deferred to §\ref{sec:methodology}).

\paragraph{Fairness Objectives}\label{subsec:fairness-objectives}
To prevent systematic disadvantage of smaller shipping lines, we embed an explicit fairness functional \(\mathcal{F}(\{c_i\}_{i=1}^N)\) over cumulative per‑vessel costs \(c_i\).  We report (and may enforce) two canonical choices:
\begin{align}
\mathrm{Gini}(\mathbf{c}) \;&=\; \frac{2 \sum_{i=1}^N i\,c_{(i)}}{N \sum_{i=1}^N c_{(i)}} - \frac{N+1}{N}, 
& \text{(lower is better)} \label{eq:gini}\\[2pt]
\mathrm{MinMax}(\mathbf{c}) \;&=\; \frac{\min_i c_i}{\max_i c_i},
& \text{(higher is better)} \label{eq:minmax}
\end{align}
where \(c_{(1)}\le \cdots \le c_{(N)}\) are sorted costs.  Fairness can be posed either as an auxiliary constraint,
\begin{equation}
\mathrm{Gini}(\mathbf{c}) \le \zeta \quad \text{or} \quad \mathrm{MinMax}(\mathbf{c}) \ge \rho, \tag{C3}
\end{equation}
or as a priced penalty in~\eqref{eq:shaped-reward} via a fairness dual \(\beta_t\ge 0\) and a calibrated \(\phi(\mathbf{c})\in\{\mathrm{Gini},\,1-\mathrm{MinMax}\}\):
\begin{equation}
\tilde r_t^i \;=\; r_t^i \;-\; \lambda_t\,e(\cdot) \;-\; \cdots \;-\; \beta_t\,\phi(\mathbf{c}_{0:t}). \label{eq:fair-priced}
\end{equation}
In practice, \(\beta_t\) is \emph{scheduled} to avoid early collapse to trivial equalisation and to stabilise learning in heterogeneous fleets.

\paragraph{Overall Objective (Hierarchical CMDP).}
Putting the pieces together, the hierarchical controller \((\pi^H,\pi^L)\) solves
\begin{align}
\max_{\pi^H,\pi^L} \;\; & \mathbb{E}\!\left[\sum_{t=0}^{T-1}\sum_{i=1}^N r_i(s_t,a_t)\right] \label{eq:obj}\\
\text{s.t.}\;\; 
& \sum_{t=0}^{T-1} e(s_t,a_t) \le B, \;\; q_{p,t}^{\text{berth}}\!\le C_p^{\text{berth}}, \;\; q_{p,t}^{\text{crane}}\!\le C_p^{\text{crane}} \;\; \forall p,t, \nonumber\\
& \mathrm{Gini}(\mathbf{c}) \le \zeta \;\; \text{or} \;\; \mathrm{MinMax}(\mathbf{c}) \ge \rho, \nonumber\\
& a_t^i \sim \pi^L(\cdot \mid o_t^i, u_{k(t)}), \;\; u_{k} \sim \pi^H(\cdot \mid \text{high‑level context}), \nonumber
\end{align}
where \(k(t)=\lfloor t/\tau_{\!H}\rfloor\).  In training we assume centralised information for critics (CTDE) while execution remains decentralised.

\paragraph{Assumptions and Simplifications}\label{subsec:assumptions-simplifications}
For tractability we discretise time, model weather via a finite scenario process with known or estimable statistics, and treat mechanical failures as Bernoulli events with bounded effects. Emission increments are bounded \(0\le e(s_t,a_t)\le \bar e\), capacities are static per port, and fairness functionals \(\phi(\cdot)\) are bounded and Lipschitz on compact domains.  The study concentrates on cooperative or semi‑cooperative fleets, although competitive incentives can be introduced by reshaping rewards.  Under these assumptions, the problem reduces to learning hierarchical policies \(\pi^H,\pi^L\) that maximise \eqref{eq:obj} while satisfying \((\mathrm{C}1)\)–\((\mathrm{C}3)\).  These conditions support the primal–dual methodology in §\ref{sec:methodology} and the finite‑time regret bounds reported later.



\section{Methodology} \label{sec:methodology}

\subsection{Theoretical Foundations}
\label{sec:theoretical-foundations}

We consider a cooperative/semi‑cooperative multi‑agent setting in which $N$ agents act in a shared, partially observable environment. A standard model is a Dec‑POMDP
\[
\bigl(\mathcal{S}, \{\mathcal{A}_i\}_{i=1}^N, P, \{\mathcal{R}_i\}_{i=1}^N, \{\mathcal{O}_i\}_{i=1}^N, O, \gamma \bigr),
\]
with global state $s_t\!\in\!\mathcal{S}$, joint action $a_t\!=\!(a_t^1,\dots,a_t^N)$, transition $P(s_{t+1}\!\mid s_t,a_t)$, local observations $o_t^i\!\sim\!O(\cdot\mid s_t,i)$, rewards $r_t^i\!=\!\mathcal{R}_i(s_t,a_t)$, and discount $0<\gamma\le1$. We follow CTDE: critics may use global context in training, while actors execute with local observations.

\paragraph{Primal--Dual Formulation of Global Constraints:}
Let $g_j(s_t,a_t)\!\ge\!0$ denote the $j$‑th instantaneous constraint signal (e.g., emissions, berth over‑usage) and $\kappa_j$ its budget (global or windowed). For joint policy $\pi$ define the \emph{vector} Lagrangian
\begin{align}
\mathcal{L}(\pi,\boldsymbol{\lambda})
&= \underbrace{\mathbb{E}_\pi\!\Big[\textstyle\sum_{t=0}^{T-1}\sum_{i=1}^N r_i(s_t,a_t)\Big]}_{\text{primal objective}} \\&
\;-\;
\underbrace{\sum_{j} \lambda_j \Big(\mathbb{E}_\pi\!\big[\textstyle\sum_{t=0}^{T-1} g_j(s_t,a_t)\big] - \kappa_j\Big)}_{\text{constraint price}},
\label{eq:lagrangian}
\end{align}
with duals $\lambda_j\!\ge\!0$. Pricing yields shaped per‑agent rewards
\begin{align}
\tilde r_t^i
&= r_t^i - \sum_{j}\lambda_{j,t}\,g_j(s_t,a_t), \qquad \lambda_{j,t}\ge 0.
\label{eq:priced-reward}
\end{align}
Duals are updated online by projected subgradient steps
\begin{align}
\lambda_{j,t+1} &= \Big[\lambda_{j,t} + \eta_t\!\big(g_j(s_t,a_t) - \tfrac{\kappa_j}{T_w}\big)\Big]_+,
\label{eq:dual-update}
\end{align}
with step size $\eta_t\!\propto\!1/\sqrt{t{+}1}$ and optional window $T_w$ for rolling budgets. Port capacities use analogous duals $\mu_{p,t},\nu_{p,t}$ with
\[
g_{p}^{\text{berth}}(s_t,a_t)=[q_{p,t}^{\text{berth}}{-}C_p^{\text{berth}}]_+, \quad
g_{p}^{\text{crane}}(s_t,a_t)=[q_{p,t}^{\text{crane}}{-}C_p^{\text{crane}}]_+.
\]

\begin{definition}[Lagrangian Saddle Problem]
\label{defn:lagrangian}
Given \eqref{eq:lagrangian}, a saddle point $(\pi^\star,\boldsymbol{\lambda}^\star)$ satisfies
\(
\mathcal{L}(\pi^\star,\boldsymbol{\lambda})
\le
\mathcal{L}(\pi^\star,\boldsymbol{\lambda}^\star)
\le
\mathcal{L}(\pi,\boldsymbol{\lambda}^\star)
\)
for all $(\pi,\boldsymbol{\lambda}\!\ge\!0)$.
\end{definition}

\begin{proposition}[Convergence to a Constraint‑Satisfying Policy]
\label{prop:constraint-conv}
Suppose the policy class is convex and $\mathbb{E}_\pi[\sum_t g_j]$ is convex in $\pi$ with Lipschitz subgradients. Then gradient ascent in $\pi$ on $\mathcal{L}$ combined with projected subgradient ascent in $\boldsymbol{\lambda}$ admits a saddle point $(\pi^\star,\boldsymbol{\lambda}^\star)$, and the induced policy satisfies $\mathbb{E}_\pi[\sum_t g_j]\!\le\!\kappa_j$ for all $j$.
\end{proposition}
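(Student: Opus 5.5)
The plan is to treat Proposition~\ref{prop:constraint-conv} as a convex--concave saddle-point statement over the (convex) policy class $\Pi$ and the dual cone $\boldsymbol{\lambda}\ge 0$. I will read "convex in $\pi$" in the sense of the occupancy-measure parametrisation. There the primal objective $J_r(\pi)=\mathbb{E}_\pi[\sum_t\sum_i r_i]$ is linear, hence concave, and each $J_{g_j}(\pi)=\mathbb{E}_\pi[\sum_t g_j]$ is convex, so $\mathcal{L}$ in \eqref{eq:lagrangian} is concave in $\pi$ and affine in $\boldsymbol{\lambda}$. I will also add a Slater-type assumption: some $\tilde\pi\in\Pi$ satisfies $J_{g_j}(\tilde\pi)<\kappa_j$ for all $j$. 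Without it the proposition fails in general, since the dual can diverge when the feasible set has empty interior. With these conventions, the sign convention of Definition~\ref{defn:lagrangian} should be read as $\pi$ maximising and $\boldsymbol{\lambda}$ minimising, matching the gradient \emph{ascent} in $\pi$ and the price interpretation in \eqref{eq:priced-reward}.

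\textbf{Step 1 (existence of a saddle point).} $\Pi$ is compact and convex: finite horizon, bounded $0\le e\le\bar e$, and a compact occupancy polytope. Slater's condition lets me bound the optimal duals by $\|\boldsymbol{\lambda}^\star\|_1\le (J_r(\pi^\star)-J_r(\tilde\pi))/\min_j(\kappa_j-J_{g_j}(\tilde\pi))$, so I may restrict $\boldsymbol{\lambda}$ to a compact box $\Lambda$. Sion's minimax theorem then gives $\max_\pi\min_{\lambda}\mathcal{L}=\min_\lambda\max_\pi\mathcal{L}$ with both extrema attained, i.e.\ a saddle point $(\pi^\star,\boldsymbol{\lambda}^\star)$. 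This is also zero duality gap, consistent with \cite{paternain2019zeroduality}.

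\textbf{Step 2 (feasibility of the saddle policy).} From $\mathcal{L}(\pi^\star,\boldsymbol{\lambda}^\star)\le\mathcal{L}(\pi^\star,\boldsymbol{\lambda})$ for every $\boldsymbol{\lambda}\ge0$, take $\lambda_j\to\infty$ in a single coordinate. If $J_{g_j}(\pi^\star)>\kappa_j$, the right side tends to $-\infty$, a contradiction. Hence $J_{g_j}(\pi^\star)\le\kappa_j$ for all $j$, and complementary slackness $\lambda_j^\star(J_{g_j}(\pi^\star)-\kappa_j)=0$ follows by setting $\boldsymbol{\lambda}=0$.

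\textbf{Step 3 (the dynamics reach it).} Run projected gradient ascent on $\pi$ and projected subgradient descent on $\boldsymbol{\lambda}$ onto $\Lambda$ (in the sense of \eqref{eq:dual-update} with its sign chosen for the min player), with steps $\eta_t\propto1/\sqrt{t+1}$. The gradients are bounded because the rewards and the $g_j$ are bounded and Lipschitz. The standard online-convex-optimisation regret argument for both players gives a duality gap of $O(1/\sqrt{T})$ for the averaged iterates $(\bar\pi_T,\bar{\boldsymbol{\lambda}}_T)$. Adapting the argument of Step~2 to $\epsilon$-saddle points then yields $[J_{g_j}(\bar\pi_T)-\kappa_j]_+=O(1/\sqrt T)$, and any limit point is a saddle point and therefore feasible by Step~2.

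The main obstacle is Step~3. The proposition's hypotheses do not literally ensure $\mathcal{L}$ is concave in the actual neural policy parameters, and the last iterate of gradient descent-ascent can cycle. I will therefore state the convergence claim for ergodic averages in occupancy-measure space, where convexity holds and mixtures of policies are realisable. I will also explicitly invoke the Slater and dual-boundedness assumptions, noting them as the implicit conditions under which the stated conclusion holds.
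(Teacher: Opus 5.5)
Your proposal is correct and follows the same route as the paper's sketch: feasibility and complementary slackness come from the saddle-point (KKT) inequalities, and convergence of the ergodic averages comes from standard convex--concave primal--dual bounds with $\eta_t\propto 1/\sqrt{t}$. Your additions fill in what the sketch omits, namely Slater's condition, compactness, existence via Sion, and reversing the inequalities of Definition~\ref{defn:lagrangian} (as written they would force $\mathbb{E}_{\pi^\star}[\sum_t g_j]\ge\kappa_j$); the one detail to fix is to take the box $\Lambda$ strictly larger than your Slater bound, because a saddle point over $\Pi\times\Lambda$ with $\Lambda$ merely containing $\boldsymbol{\lambda}^\star$ can be infeasible, and both Step~2 (which needs every $\boldsymbol{\lambda}\ge 0$) and the Step~3 violation bound rely on that margin.
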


\begin{proof}[Proof Sketch.]
At a saddle point the KKT conditions hold, giving complementary slackness and feasibility. Standard results for convex–concave saddle problems with diminishing steps ($\eta_t\!\sim\!1/\sqrt{t}$) yield convergence of ergodic averages to $(\pi^\star,\boldsymbol{\lambda}^\star)$.
\end{proof}

\begin{proposition}[Finite‑Time Violation and Suboptimality]
\label{prop:finite-time}
Under bounded rewards/constraints and $\eta_t\!\propto\!1/\sqrt{t{+}1}$, the cumulative constraint violation and primal suboptimality of the averaged policy obey
\[
\sum_{t=0}^{T-1}\Big(\mathbb{E}[g_j(s_t,a_t)]-\tfrac{\kappa_j}{T_w}\Big)_+ = \mathcal{O}(\sqrt{T}),\quad
\text{gap} = \mathcal{O}(1/\sqrt{T}).
\]
\end{proposition}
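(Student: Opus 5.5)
The plan is to run the standard online primal--dual argument. The policy player and the dual player are treated as two no-regret learners on the Lagrangian~\eqref{eq:lagrangian}. Throughout we work under the convexity hypotheses of Proposition~\ref{prop:constraint-conv}, i.e.\ in the space of occupancy measures, where $\mathbb{E}_\pi[\sum_t r]$ is linear and each $\mathbb{E}_\pi[\sum_t g_j]$ is convex.

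The per-round quantities are the bounded payoffs $f_t(\pi)=\mathbb{E}_\pi[\sum_i r_i]$ and constraint slacks $h_{j,t}(\pi)=\mathbb{E}_\pi[g_j]-\kappa_j/T_w$. Both are bounded by constants $R$ and $G$; the bound on $G$ uses $0\le e\le\bar e$ from the Assumptions. The steps, in order, are as follows.

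\textbf{Step 1 (dual regret).} Apply the standard projected online gradient bound to the dual update~\eqref{eq:dual-update}, viewed as online ascent on the linear functions $\lambda\mapsto\lambda_j h_{j,t}$. With $\eta_t\propto 1/\sqrt{t+1}$ this gives, for every comparator $\lambda\ge 0$,
\[
\sum_{t}(\lambda_j-\lambda_{j,t})h_{j,t}\le \frac{\lambda_j^2}{2\eta_T}+\frac{G^2}{2}\sum_t\eta_t=\mathcal{O}\bigl((1+\lambda_j^2)\sqrt{T}\bigr).
\]

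\textbf{Step 2 (primal regret).} Assume the policy step (projected gradient ascent with the same schedule) has $\mathcal{O}(\sqrt{T})$ regret against any fixed feasible $\pi^\star$ on the priced rewards~\eqref{eq:priced-reward}. This needs the dual iterates to be bounded, which is established first. Along the lines of Proposition~\ref{prop:constraint-conv}, Slater's condition with slack $\delta>0$ gives $\|\boldsymbol\lambda^\star\|\le 2R T/\delta$ per episode. From this, a drift argument shows $\lambda_{j,t}\le \Lambda=\mathcal{O}(1)$ uniformly.

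\textbf{Step 3 (combine).} Add the two regret inequalities, taking $\pi^\star$ feasible (so $h_{j,t}(\pi^\star)\le 0$), and choose the dual comparator in two ways.
\begin{enumerate}[label=(\roman*)]
\item Taking $\lambda=0$ yields $\sum_t\bigl(f_t(\pi^\star)-f_t(\pi_t)\bigr)\le \mathcal{O}(\sqrt{T})$. Dividing by $T$ and using Jensen on the averaged policy $\bar\pi_T$, whose value is concave in the occupancy-measure parametrisation, gives $\text{gap}=\mathcal{O}(1/\sqrt{T})$.
\item Taking $\lambda_j=\lambda_j^\star+1$ (the standard trick) gives $\sum_t h_{j,t}(\pi_t)\le\mathcal{O}(\sqrt{T})$ for the unclipped sum.
\end{enumerate}

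\textbf{Step 4 (positive part).} To bound the clipped sum $\sum_t(\cdot)_+$ stated in the proposition, use the telescoping property of the projected update instead. Since $\lambda_{j,t+1}\ge \lambda_{j,t}+\eta_t h_{j,t}$, rearranging gives $h_{j,t}\le(\lambda_{j,t+1}-\lambda_{j,t})/\eta_t$. Summing by parts with $1/\eta_t\propto\sqrt{t+1}$ gives $\sum_t h_{j,t}\le \Lambda\sqrt{T}$. For the positive part, split the rounds into those with $h_{j,t}>0$ and the rest, and charge positive rounds to dual increases; each such increase is at most $\eta_t G$. Alternatively, use that the expected signals are nonnegative and bounded, so the clipped excess is dominated by the unclipped excess plus a term controlled by the number of dual-zero rounds. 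Either route gives $\mathcal{O}(\sqrt{T})$.

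The main obstacle is Step 2 together with Step 4. The primal objective is genuinely nonconvex in policy parameters, and the multi-agent, partially observed setting breaks the occupancy-measure convexity that Proposition~\ref{prop:constraint-conv} presumes. The plan is to state the result under that same convexity hypothesis, treating primal $\mathcal{O}(\sqrt{T})$ regret as an oracle assumption as in~\cite{ding2020npgpd,paternain2019zeroduality}. The uniform bound $\Lambda$ on the duals is the delicate ingredient needed to turn regret into violation bounds. I expect to prove it via Slater slack and a Lyapunov drift on $\lambda_{j,t}^2$.
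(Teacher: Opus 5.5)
Your Steps 1--3 follow the paper's route exactly: its sketch invokes online primal--dual regret bounds and sums the primal and dual regret inequalities over $t$. The genuine gap is Step 4, the only place you address the positive part in the statement, and it does not go through.

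\textbf{Why Step 4 fails.} Summing the regret inequalities (your Step 3(ii)), or telescoping the dual update (the correct first half of Step 4), controls only the signed sum, i.e.\ $\bigl(\sum_t h_{j,t}\bigr)_+=\mathcal{O}(\sqrt T)$. Your charging argument cannot upgrade this. On positive rounds $h_{j,t}=(\lambda_{j,t+1}-\lambda_{j,t})/\eta_t$ holds exactly. But the increments collected on those rounds are not bounded by $\Lambda$, because decrements on negative rounds cancel them. Bounding each increment by $\eta_t G$ only returns $h_{j,t}\le G$ and a total of $\mathcal{O}(T)$. Your alternative fails for a similar reason: $\sum_t(h_{j,t})_+=\sum_t h_{j,t}+\sum_t(h_{j,t})_-$ with $(x)_-=\max\{-x,0\}$, and the accumulated slack $\sum_t(h_{j,t})_-$ has nothing to do with how often $\lambda$ sits at zero.

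\textbf{The clipped bound is false along the iterates.} Take a stateless instance with a deterministic throttle $a_t\in[0,1]$, $r=g=a$, $\kappa_j/T_w=\tfrac12$, and primal step $a_{t+1}=\Pi_{[0,1]}\bigl(a_t+\eta_t(1-\lambda_t)\bigr)$. Everything is linear and Slater holds at $a=0$. The saddle point is $(\tfrac12,1)$, the primal regret is $\mathcal{O}(\sqrt T)$, the duals stay bounded, and the averaged gap is $\mathcal{O}(1/\sqrt T)$. Yet before projection, $(a_{t+1}-\tfrac12)^2+(\lambda_{t+1}-1)^2=(1+\eta_t^2)\bigl[(a_t-\tfrac12)^2+(\lambda_t-1)^2\bigr]$, and projection only ever yields $|a-\tfrac12|=\tfrac12$ or $\lambda=0$. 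So from $\lambda_0=0$ the iterates stay at distance at least $\tfrac12$ from the saddle point and keep circling it, through a total angle of order $\sum_t\eta_t\sim\sqrt T$. Hence $h_t=a_t-\tfrac12$ oscillates with non-decaying amplitude, and $\sum_t(h_t)_+=\Theta(T)$. Primal--dual averages converge; last iterates need not.

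\textbf{What this route can prove.} It gives the signed bound. It also gives the clipped bound for the averaged policy $\bar\pi_T$ in a stationary per-round model, where by Jensen each summand is $(h_j(\bar\pi_T))_+=\mathcal{O}(1/\sqrt T)$. A per-iterate clipped bound needs a different algorithm, e.g.\ dual increments driven by $[g_j-\kappa_j/T_w]_+$.

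\textbf{Secondary issues.} First, Step 2's uniform dual bound is not yet an argument. A drift proof of $\lambda_{j,t}\le\Lambda$ needs a per-round (or per-window) property of the primal, such as a near best response to $\lambda_t$ or a coupled drift-plus-penalty update. A cumulative regret oracle does not provide one. That oracle's constant also scales with $\max_t\lambda_{j,t}$, since the Lagrangian gradients do, so assuming it is circular. The same dependence enters Step 1: with decreasing steps on the unbounded set $\lambda\ge 0$, telescoping leaves $\max_t(\lambda_{j,t}-\lambda_j)^2/(2\eta_T)$ rather than $\lambda_j^2/(2\eta_T)$. Your bound $2RT/\delta$ is $\mathcal{O}(1)$ only after you reconcile per-episode and per-step normalisations.

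Second, if rounds are time steps within an episode, $h_{j,t}$ depends on the state distribution generated by $\pi_0,\dots,\pi_{t-1}$, not on $\pi_t$ alone. Occupancy-measure linearity is a per-episode statement, so the OCO reduction needs episodes as rounds or a stationary model.
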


\begin{proof}[Proof Sketch.]
Apply online convex optimisation bounds for primal–dual subgradient dynamics with bounded stochastic gradients, summing the standard regret inequalities over $t$.
\end{proof}

\subsubsection*{Hierarchical Decomposition and CMDP View}

\begin{definition}[Constrained MDP (CMDP)]
\label{defn:cmdp}
A CMDP is $\langle\mathcal{S},\mathcal{A},\mathcal{P},\mathcal{R},\mathcal{C},\kappa\rangle$ with objective $\max_\pi \mathbb{E}\,[\sum_t \mathcal{R}(s_t,a_t)]$ s.t. $\mathbb{E}\,[\sum_t \mathcal{C}(s_t,a_t)]\!\le\!\kappa$.
\end{definition}

We decouple time scales: a high‑level policy $\pi^H$ acts every $\tau_H$ steps, selecting macro‑actions $u_k$ (routes, arrival windows, budgets); a low‑level policy $\pi^L$ acts each step with $a_t^i\!\sim\!\pi^L(\cdot\mid o_t^i,u_{k(t)})$.

\begin{proposition}[Hierarchical Policy Convergence]
\label{prop:hier-conv}
Let $\pi^H$ be updated in a CMDP with duals enforcing global caps, and let $\pi^L$ be updated on shaped rewards \eqref{eq:priced-reward}. With compatible step sizes and bounded duals, the nested updates converge to a locally optimal pair $(\pi^{H},\pi^{L})$ that is feasible for the global constraints.
\end{proposition}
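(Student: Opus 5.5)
The argument is a two-timescale stochastic approximation built on Propositions~\ref{prop:constraint-conv} and~\ref{prop:finite-time}. The timescales are ordered as low-level (fastest), then high-level, then duals (slowest). Concretely, I take step sizes $\alpha^L_t,\alpha^H_t,\eta_t$ with $\sum\alpha=\infty$, $\sum\alpha^2<\infty$ on each, and $\alpha^H_t/\alpha^L_t\to0$, $\eta_t/\alpha^H_t\to0$; this is what I take ``compatible step sizes'' to mean. The approach is to freeze slower variables, analyse the fastest recursion as an ODE, and then move up the hierarchy in the manner of Borkar.

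\textbf{Step 1 (low level, duals and $\pi^H$ frozen).} For fixed $(\pi^H,\boldsymbol\lambda)$, each macro-action $u_k$ induces a finite-horizon POMDP segment of length $\tau_H$. In that segment $\pi^L$ does policy-gradient ascent on the shaped reward \eqref{eq:priced-reward}. Rewards are bounded, constraint signals satisfy $0\le g_j\le\bar e$, and duals are bounded by assumption, so the shaped reward is bounded. Hence the stochastic gradient has bounded variance, and the usual smoothness assumption gives a Lipschitz gradient. Standard stochastic-gradient ODE arguments then give convergence of $\pi^L$ to a stationary (locally optimal) response $\pi^L_\star(\pi^H,\boldsymbol\lambda)$. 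I will additionally assume this response is locally Lipschitz, e.g.\ via an isolated, nondegenerate local maximum.

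\textbf{Step 2 (high level on the slower scale).} With $\pi^L$ tracking $\pi^L_\star$, the high-level problem is a semi-MDP CMDP (Definition~\ref{defn:cmdp}) over epochs $k$. Its reward is the $\tau_H$-step return of the shaped reward under $\pi^L_\star$. The two-timescale lemma lets $\pi^H$ see the low level as equilibrated up to a vanishing error. Gradient ascent on $\mathcal{L}(\pi^H,\pi^L_\star(\pi^H,\boldsymbol\lambda),\boldsymbol\lambda)$ then converges to a local maximiser for each fixed $\boldsymbol\lambda$.

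\textbf{Step 3 (duals and feasibility).} On the slowest scale, the projected update \eqref{eq:dual-update} is subgradient descent on the dual function $d(\boldsymbol\lambda)=\max_{\pi^H,\pi^L}\mathcal{L}$, evaluated at the local maximisers. Boundedness of the duals, through projection onto $[0,\Lambda]$, gives iterate stability. Invoking Proposition~\ref{prop:constraint-conv} with complementary slackness at the limit point, I obtain $\mathbb{E}[\sum_t g_j]\le\kappa_j$ whenever $\lambda_j^\star<\Lambda$. Proposition~\ref{prop:finite-time} supplies the $\mathcal{O}(\sqrt T)$ averaged-violation rate, which vanishes after normalisation. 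The limit pair $(\pi^H,\pi^L)$ is therefore locally optimal for the Lagrangian at $\boldsymbol\lambda^\star$ and feasible.

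\textbf{Main obstacle.} The hard step is Step~3. Proposition~\ref{prop:constraint-conv} assumes convexity, but the hierarchical parametrisation is nonconvex, so strong duality and feasibility of a merely local saddle are not automatic. My first attempt is to work in occupancy-measure space for the high-level semi-MDP, where the CMDP is a linear program and the zero-duality-gap results cited earlier apply. Local optimality would then be transferred back to parameter space under an overparametrisation (surjectivity) assumption. Failing that, I will require a Slater point strictly inside the caps and choose $\Lambda$ larger than the corresponding dual bound $(\text{reward range})/\text{slack}$. This ensures the projection never binds at the limit, so complementary slackness yields feasibility.
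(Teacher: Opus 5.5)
Your Steps 1--2 match the paper's own argument, made explicit. Its sketch lets the low level track the current $u_k$ and duals, lets the outer loop take policy-gradient steps on the induced return, and cites two-time-scale stochastic approximation for local convergence. Putting $\boldsymbol{\lambda}$ on a third, slowest scale is a standard refinement. The paper's sketch says nothing about feasibility, and that is where your proposal has a genuine gap: neither repair you offer in Step~3 works.

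\textbf{The Slater repair.} The bound $\lambda_j^\star\le(\text{reward range})/\text{slack}$ comes from $d(\boldsymbol{\lambda}^\star)=\max_\pi\mathcal{L}(\pi,\boldsymbol{\lambda}^\star)\ge\mathcal{L}(\bar\pi,\boldsymbol{\lambda}^\star)$, with $\bar\pi$ the Slater point and $\boldsymbol{\lambda}^\star$ the \emph{optimal} multiplier of the global problem. Your dual recursion is instead driven by $\mathbb{E}_{\pi_\star(\boldsymbol{\lambda})}[\sum_t g_j]-\kappa_j$, where $\pi_\star(\boldsymbol{\lambda})$ is only a local maximiser tracked by the faster loops. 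Its limit is therefore a stationary point of $\boldsymbol{\lambda}\mapsto\mathcal{L}(\pi_\star(\boldsymbol{\lambda}),\boldsymbol{\lambda})$, not of $d$, and $\mathcal{L}(\pi_\star(\boldsymbol{\lambda}),\boldsymbol{\lambda})\ge\mathcal{L}(\bar\pi,\boldsymbol{\lambda})$ can fail. Nothing then stops $\lambda_j$ from reaching $\Lambda$ at a local maximiser of $\mathcal{L}(\cdot,\boldsymbol{\lambda})$ that still violates the cap. There complementary slackness gives only $\mathbb{E}[\sum_t g_j]\ge\kappa_j$.

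\textbf{The occupancy-measure repair.} The LP and zero-duality-gap results require a Markov, fully observed state and the unrestricted policy class. Here $\pi^H$ sees aggregated context and $\pi^L$ is decentralised on local observations. Moreover, the high-level ``semi-MDP'' has rewards and transitions that depend on $\pi^H$ itself through $\pi^L_\star(\pi^H,\boldsymbol{\lambda})$. So there is no fixed polytope, and a zero gap would not make a \emph{local} limit feasible anyway.

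\textbf{The missing idea.} Read ``bounded duals'' as a stability hypothesis on the iterates of \eqref{eq:dual-update}, which are projected only onto $[0,\infty)$, rather than clipping them to $[0,\Lambda]$. Since $[x]_+\ge x$, you get $\lambda_{j,t_1+1}-\lambda_{j,t_0}\ge\sum_{t=t_0}^{t_1}\eta_t\bigl(g_j(s_t,a_t)-\kappa_j/T_w\bigr)$. Suppose the limit pair exceeded the cap by $\delta>0$ per episode. Then $\sum_t\eta_t=\infty$, vanishing tracking error, and a convergent martingale-noise sum (as $\sum_t\eta_t^2<\infty$) force the right-hand side to $+\infty$, contradicting boundedness. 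This gives feasibility with no convexity, no Slater point, and no appeal to Proposition~\ref{prop:constraint-conv}.

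If you keep the clipping at $\Lambda$, you must instead assume something like the condition used in the RCPO analysis \cite{tessler2019rcpo}: every local minimiser of the constraint cost under the hierarchical parametrisation is feasible.

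Finally, $\eta_t/\alpha^H_t\to0$ together with $\sum_t(\alpha^H_t)^2<\infty$ rules out $\eta_t\propto1/\sqrt{t+1}$. So the rate of Proposition~\ref{prop:finite-time} cannot be imported into your step-size scheme.
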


\begin{proof}[Proof Sketch.]
Treat the inner loop (low‑level) as tracking the current $u_k$ and duals; the outer loop performs a policy‑gradient step on the induced return. Two‑time‑scale stochastic approximation yields local convergence.
\end{proof}

\subsubsection*{Fairness Metrics and Scheduled Penalties}

Let $\mathbf{c}\!=\!(c_1,\ldots,c_N)$ denote per‑vessel episode costs. We use Gini and min--max fairness:
\begin{align}
\mathrm{Gini}(\mathbf{c})
&=\frac{2\sum_{i=1}^{N} i\,c_{(i)}}{N\sum_{i=1}^{N} c_{(i)}} - \frac{N+1}{N},\quad
\mathrm{MinMax}(\mathbf{c}) = \frac{\min_i c_i}{\max_i c_i},
\label{eq:gini-minmax}
\end{align}
with $c_{(1)}\!\le\!\cdots\!\le\!c_{(N)}$. Fairness is enforced either as a constraint $\phi(\mathbf{c})\!\le\!\zeta$ with dual $\beta_t$, or as a scheduled penalty in shaped rewards
\begin{align}
\tilde r_t^i
&= r_t^i \;-\; \sum_{j}\lambda_{j,t} g_j(s_t,a_t) \;-\; \beta_t\,\phi(\mathbf{c}_{0:t}),
\label{eq:fair-priced}
\end{align}
where $\phi\!\in\!\{\mathrm{Gini},\,1{-}\mathrm{MinMax}\}$ and $\beta_t$ follows either a linear or target‑tracking schedule:
\[
\beta_t=\min\{\beta_{\max},\,s_\beta t\},
\qquad
\beta_{t+1}=\beta_t+\eta_\beta(\phi(\mathbf{c}_{0:t})-\zeta)_+ .
\]

\begin{proposition}[Fairness Guarantees]
\label{prop:fairness}
Assume bounded costs and Lipschitz $\phi$. With $\beta_t$ updated by projected ascent on the fairness constraint, the cumulative fairness regret satisfies
\(
\sum_{t=0}^{T-1}(\phi(\mathbf{c}_{0:t})-\zeta)_+ = \mathcal{O}(\sqrt{T}).
\)
Moreover, a sufficiently large $\beta_{\max}$ in the scheduled penalty ensures $\mathrm{MinMax}(\mathbf{c})\!\ge\!\rho$ (or $\mathrm{Gini}(\mathbf{c})\!\le\!\zeta$) up to $\mathcal{O}(1/\sqrt{T})$.
\end{proposition}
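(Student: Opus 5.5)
The plan is to treat the fairness dual $\beta_t$ exactly like the emission duals in Proposition~\ref{prop:finite-time}. Concretely, I would cast the fairness constraint $\phi(\mathbf{c})\le\zeta$ as one more coordinate $g_{F}(s_t,a_t):=\phi(\mathbf{c}_{0:t})-\zeta$ of the constraint vector and rerun the primal--dual online-convex-optimisation argument.

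\textbf{Step 1: boundedness.} Bounded instantaneous costs $\ell_i$ and Lipschitz $\phi$ on a compact domain give $|\phi(\mathbf{c}_{0:t})-\zeta|\le G$ for a constant $G$. For Gini and $1-\mathrm{MinMax}$ we can in fact take $G\le 1$ whenever costs stay bounded away from zero.

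\textbf{Step 2: per-step dual inequality.} The projected ascent $\beta_{t+1}=[\beta_t+\eta(\phi(\mathbf{c}_{0:t})-\zeta)]_+$ with $\beta_0=0$ satisfies $\beta_{t+1}\ge\beta_t+\eta(\phi_t-\zeta)$. I would use a constant step $\eta\asymp 1/\sqrt{T}$, or $\eta_t\propto 1/\sqrt{t+1}$ via a doubling argument. Telescoping gives
\[
\textstyle\sum_{t<T}(\phi_t-\zeta)\le \beta_T/\eta .
\]

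\textbf{Step 3: from signed sum to hinge sum.} The standard primal--dual regret inequality for the shaped objective \eqref{eq:fair-priced} is
\[
\sum_t\bigl(f(\pi^\star)-f(\pi_t)\bigr)+\sum_t\beta_t(\phi_t-\zeta)\le O\!\bigl(\tfrac{D^2}{\eta_\pi}+\eta_\pi T\bigr),
\]
where $D$ is the policy-class diameter and $\pi^\star$ is a Slater-feasible comparator. Combining it with $\beta_T^2\le 2\eta\sum_t\beta_t(\phi_t-\zeta)+\eta^2G^2T$ bounds $\beta_T=O(1)$, and hence $\sum_t(\phi_t-\zeta)=O(\sqrt T)$. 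Passing to the positive part is the main obstacle. The hinge sum $\sum_t(\phi_t-\zeta)_+$ is not controlled by the signed sum, since slack steps can cancel violations. My first attempt would exploit the accumulating structure of $\mathbf{c}_{0:t}$: consecutive iterates move by $O(1/t)$ in normalized cost, so by the Lipschitz property $|\phi_{t+1}-\phi_t|=O(L/t)$. A violation episode therefore cannot alternate sign rapidly. I would then apply the drift argument per maximal violation interval $I$: on such an interval $\beta$ grows by $\eta\sum_I(\phi_t-\zeta)$, and $\beta$ stays uniformly bounded by $\beta_{\max}$, whose existence I would deduce from Slater plus boundedness. This gives $\sum_I(\phi_t-\zeta)_+\le\beta_{\max}/\eta$. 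Summing over intervals requires showing there are $O(1)$ of them, or that each contributes $O(\sqrt T)$ in aggregate. If that fails, the fallback is the cumulative-constraint trick of modified primal--dual OCO (Mahdavi--Jin--Yang style): update $\beta$ with the clipped signal $[\phi_t-\zeta]_+$ in place of $\phi_t-\zeta$, which directly yields an $O(\sqrt T)$ hinge bound.

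\textbf{Step 4: the second claim.} This follows by Lagrangian duality, as in Proposition~\ref{prop:constraint-conv}. Once $\beta_{\max}\ge\beta^\star$ (the optimal fairness multiplier, finite under Slater), the penalised problem is an exact penalty for (C3). The $O(1/\sqrt T)$ optimisation gap of the averaged policy from Proposition~\ref{prop:finite-time} then translates, via Lipschitzness of $\phi$, into $\phi(\bar{\mathbf{c}})\le\zeta+O(1/\sqrt T)$. For MinMax the same argument gives $\mathrm{MinMax}\ge\rho-O(1/\sqrt T)$, using $\phi=1-\mathrm{MinMax}$ and $\zeta=1-\rho$.
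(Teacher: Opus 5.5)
Your route is the paper's. Its proof of Proposition~\ref{prop:fairness} only says the argument is identical to Proposition~\ref{prop:finite-time} with $\phi$ as the constraint signal, and that the scheduled penalty gives ``an equivalent Lagrangian with a bounded dual''. Combined with the clipped, monotone update $\beta_{t+1}=\beta_t+\eta_\beta(\phi(\mathbf{c}_{0:t})-\zeta)_+$ that the paper actually runs (Algorithm~\ref{alg:ch-marl-combined}, with a cap $\beta_{\max}$), that is precisely your fallback. The genuine gap is the one you flag in Step~3, and neither of your routes closes it.

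\emph{First route.} The two properties you rely on, a bounded dual and $|\phi_{t+1}-\phi_t|=\mathcal{O}(L/t)$, are jointly consistent with a hinge sum of order $T^{3/4}$. Let $\phi_t-\zeta$ be a triangular wave on $[T/2,T]$ with slope of order $1/T$ and amplitude $T^{-1/4}$. With $\eta\asymp 1/\sqrt T$, each violation interval raises $\beta$ by only $\mathcal{O}(1)$ and the following slack interval lowers it again. Yet there are about $T^{1/4}$ violation intervals, each contributing order $\sqrt T$. So counting intervals cannot work without using how the primal iterates respond to $\beta$. Moreover, Step~3 does not give $\beta_T=\mathcal{O}(1)$ as written. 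Infeasible iterates can out-earn the feasible comparator, so $\sum_t(f(\pi^\star)-f(\pi_t))$ may be of order $-T$, and your two inequalities give only $\beta_T^2=\mathcal{O}(\eta T)$, i.e.\ $\beta_T=\mathcal{O}(T^{1/4})$. The standard way to get an $\mathcal{O}(1)$ dual is to keep the Slater slack term $\delta\sum_t\beta_t$ (which your regret inequality drops) inside a drift argument. That argument needs the round-$t$ signal to be a function of the round-$t$ decision, which is exactly what fails for $\phi(\mathbf{c}_{0:t})$, whose slow variation you exploit: a comparator played at round $t$ moves $\phi$ by only $\mathcal{O}(L/t)$ and cannot certify $\phi\le\zeta-\delta$.

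\emph{Fallback.} It does not ``directly'' give $\mathcal{O}(\sqrt T)$ either. With the clipped signal and a constant step $\eta$, $\beta$ is nondecreasing and $\sum_{t<T}(\phi_t-\zeta)_+=(\beta_T-\beta_0)/\eta$ exactly. The claim is therefore \emph{equivalent} to $\beta_T=\mathcal{O}(1)$. But Slater's condition, your only tool for bounding the dual, is void for the clipped constraint $[\phi-\zeta]_+\le 0$, which has no strictly feasible point. With a merely feasible comparator, the Step~3 computation again yields $\beta_T=\mathcal{O}(T^{1/4})$ and a hinge bound of $\mathcal{O}(T^{3/4})$. The cap $\beta_{\max}$ cannot substitute: once $\beta$ saturates, further violations no longer register, so $\beta\le\beta_{\max}$ bounds the hinge sum only up to the saturation time. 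What is missing, in your proposal and equally in the paper's sketch, is an argument from the primal dynamics that the \emph{uncapped} dual stays $\mathcal{O}(1)$.

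\emph{Step 4.} The passage from an $\epsilon$ objective gap to a violation bound runs through the saddle inequality, giving $\phi\le\zeta+\epsilon/(\beta_{\max}-\beta^\star)$. This needs $\beta_{\max}>\beta^\star$ strictly, plus strong duality over the policy class; Lipschitzness of $\phi$ plays no role there.
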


\begin{proof}[Proof Sketch.]
Identical to Proposition~\ref{prop:finite-time} with $\phi$ as the constraint signal; the scheduled penalty yields an equivalent Lagrangian with a bounded dual.
\end{proof}

\subsection{CH‑MARL Framework}
\label{sec:ch-marl}

We synthesise the above into a hierarchical architecture (Figure~\ref{fig:ch-marl}). A high‑level policy acts every $\tau_H$ steps to set routes, arrival windows, and budget envelopes; low‑level policies execute decentralised control conditioned on local observations and the current macro‑action. Constraint prices $(\lambda,\mu,\nu)$ and the fairness weight $\beta$ are updated online and folded into shaped rewards that any base learner can consume (PPO/MAPPO/QMIX).

\begin{figure}[!ht]
    \centering
    \includegraphics[width=\linewidth]{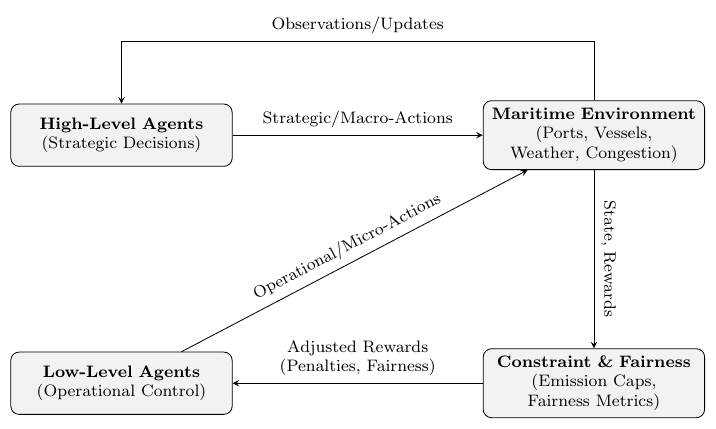}
    \caption{CH‑MARL with high‑level routing/budgeting and low‑level control under primal--dual (constraint) and fairness layers (CTDE training, decentralised execution).}
    \label{fig:ch-marl}
\end{figure}

\setlength{\textfloatsep}{3pt}
\begin{algorithm}[!ht]
\caption{\textbf{CH‑MARL with Online Primal--Dual and Fairness Scheduling}}
\label{alg:ch-marl-combined}
\SetAlgoLined
\DontPrintSemicolon

\KwIn{Environment $E$; emission budget $B$; capacities $\{C_p^{\text{berth}},C_p^{\text{crane}}\}$; fairness functional $\phi$; stepsizes $(\eta_t,\eta_\beta)$; horizon $T$; macro period $\tau_H$; episodes $N$.}
\KwOut{Policies $(\pi^H,\pi^L)$; duals $(\lambda,\mu,\nu,\beta)$.}

Initialise $\pi^H,\pi^L$; duals $\lambda\!=\!0$, $\mu_p\!=\!0$, $\nu_p\!=\!0$, $\beta\!=\!0$.\;
\For{$\mathrm{ep}=1$ \KwTo $N$}{
  Reset $s_0$, cumulative costs $\mathbf{c}\!=\!\mathbf{0}$, emissions sum $E\!=\!0$.\;
  \For{$t=0$ \KwTo $T{-}1$}{
    \If{$t \bmod \tau_H = 0$}{ sample macro‑action $u_{k(t)} \!\sim\! \pi^H(\cdot)$\; }
    \For{$i=1$ \KwTo $N$}{ select $a_t^i \!\sim\! \pi^L(\cdot\mid o_t^i, u_{k(t)})$\; }
    $(s_{t+1},\{r_t^i\},\textit{metrics}) \leftarrow E.\mathrm{step}(\{a_t^i\}, u_{k(t)})$\;
    Extract $e_t$, $q_{p,t}^{\text{berth}}$, $q_{p,t}^{\text{crane}}$; update $E\!\leftarrow\!E{+}e_t$, $\mathbf{c}$.\;
    \emph{// Dual updates (projected)}\;
    $\lambda \!\leftarrow\! [\lambda + \eta_t (e_t - B/T_w)]_+$;\;
    $\mu_p \!\leftarrow\! [\mu_p + \eta_t (q_{p,t}^{\text{berth}}{-}C_p^{\text{berth}})_+]_+$;\;
    $\nu_p \!\leftarrow\! [\nu_p + \eta_t (q_{p,t}^{\text{crane}}{-}C_p^{\text{crane}})_+]_+$;\;
    \emph{// Fairness schedule (penalty or dual)}\;
    $\beta \!\leftarrow\! \min\{\beta_{\max},\, \beta + \eta_\beta(\phi(\mathbf{c}){-}\zeta)_+\}$;\;
    \For{$i=1$ \KwTo $N$}{
      $\tilde r_t^i \!\leftarrow\! r_t^i - \lambda e_t - \sum_p \mu_p[q_{p,t}^{\text{berth}}{-}C_p^{\text{berth}}]_+ - \sum_p \nu_p[q_{p,t}^{\text{crane}}{-}C_p^{\text{crane}}]_+ - \beta\,\phi(\mathbf{c})$;\;
      store transition and update $\pi^L$/$\pi^H$ periodically (PPO/MAPPO; QMIX factorisation) under CTDE.\;
    }
  }
}
\end{algorithm}

\subsection{Complexity and Practical Scalability}

Per episode, the high‑level updates occur $K\!=\!\lceil T/\tau_H\rceil$ times and scale with $O(K\cdot \mathrm{cost}(\pi^H))$, while low‑level updates scale with $O(NT\cdot \mathrm{cost}(\pi^L))$. With parameter sharing and (optional) value factorisation, the effective per‑step cost grows approximately linearly in $N$. Dual updates are $O(|\mathcal{P}|)$ per step. Memory is dominated by policy/critic parameters and rollout buffers; CTDE critics add a central stream but do not change execution cost.


\section{Experimental Setup}
\label{sec:experimental-setup}

\paragraph{Digital Twin Environment:} 
\label{subsec:digital-twin}

We evaluate our approach in a maritime digital twin that emulates modern operations at realistic scale\footnote{Full code is provided through this link: \url{https://github.com/alqithami/EcoFairCHAMRL}}. Unless otherwise specified, all maritime experiments use 16 ports and 50 vessels. The environment captures a directed network of sea lanes with nautical distances and typical voyage durations, subject to stochastic weather patterns that reduce effective speed and inflate fuel burn. Each vessel acts at hourly decision intervals, updating position and fuel usage under cubic fuel–speed curves and executing bounded detours when doing so is beneficial. Port processes model finite berth and crane capacities; when the number of arrivals exceeds capacity, queues form, inducing waiting times and idling fuel consumption that accrue explicitly as part of each vessel’s cost. 

To improve fidelity, we parameterise port statistics (e.g., berth occupancy and crane throughput), vessel characteristics (e.g., hull and engine parameters defining the fuel curve), and weather from historical or synthetic records (e.g., wind speed and wave height). The simulator exposes both fleet‑level summaries and per‑vessel local observations, supporting hierarchical control: the high‑level policy uses aggregated context for routing and budget allocation, while the low‑level policy operates with local measurements for real‑time control. Partial observability is preserved in execution through local sensor views, while centralised training may provide richer context to critics.

\paragraph{Key Performance Indicators (KPIs):}
\label{subsec:kpis}

We report energy consumption (total fuel) and total emissions (e.g., CO$_2$ equivalents) per episode and over rolling windows, reflecting operational efficiency and environmental impact. Fairness is assessed on per‑vessel costs via the Gini coefficient (lower is better) and the min--max ratio (higher is better), providing complementary views of equity. Operational throughput is measured as voyages completed or cargo moved over the horizon, and queueing behaviour is quantified by aggregate waiting hours when berth limits bind. Finally, we record the frequency and magnitude of any constraint violations, including global emission caps and per‑port capacity limits. All metrics are logged at the episode level, and fairness statistics are exported to CSV to enable reproducible aggregation across seeds.

\paragraph{Comparative Baselines:}
\label{subsec:baselines}

We compare \emph{EcoFair‑CH‑MARL} to decentralised MARL without caps or fairness (optimising local or team reward only), a centralised super‑agent with full observability and an explicit cap (a coordination upper bound), and a hierarchical learner without caps or fairness (isolating the benefit of decomposition from the effects of constraint and equity enforcement). In addition, we include two fairness‑oriented MARL baselines, SOTO and FEN, to assess whether explicit equity mechanisms absent budget enforcement suffice to deliver fair and efficient outcomes. To demonstrate algorithm‑agnostic integration, we instantiate our framework and baselines with PPO and, where applicable, include MAPPO and QMIX variants (reported in‑paper for PPO and in the supplement for additional learners). All methods share identical horizons, training budgets, and seeds.

\paragraph{Training and Hyperparameters:}
\label{subsec:train-hyperparams}

Unless noted, each run trains for $1{,}200$ episodes with horizon $T{=}50$ steps and averages over three random seeds; we report mean$\pm$std. We employ policy‑gradient learners (e.g., PPO) with \texttt{Adam} optimisers and typical settings in the range $[10^{-4},10^{-3}]$ for the learning rate, $\gamma{=}0.99$ for discounting, and entropy regularisation for exploration. Emission caps are enforced online with a projected subgradient update of the dual variable associated with the budget; berth and crane limits use analogous duals. To stabilise equity under heterogeneous fleets, fairness shaping uses a scheduled penalty on Gini or on $1{-}$min--max that increases over training or tracks a target; ablations vary the maximum penalty to examine efficiency–equity trade‑offs. We follow a CTDE regime: critics may consume global context, while actors execute decentralised policies using local observations conditioned on the current high‑level macro‑action.

\section{Experiments and Results}
\label{sec:experiments}

All maritime experiments use the 16 ports / 50 vessels configuration with horizon \(T{=}50\) and \(1{,}200\) training episodes per run (three random seeds unless noted). We compare \emph{EcoFair‑CH‑MARL} instantiated with PPO (reported in‑paper) against decentralised MARL without caps/fairness, a centralised super‑agent, a hierarchical learner without caps/fairness, and fairness‑specific baselines (SOTO, FEN). To demonstrate algorithm‑agnostic modularity, MAPPO and QMIX instantiations using \emph{the same} emissions/fairness layers are also included. We log per‑episode returns and fairness metrics (Gini, min--max); emissions logs were not present in this bundle and are therefore omitted from per‑episode reporting.

\paragraph{Fairness (Gini and min--max).} The PPO row corresponds to EcoFair‑CH‑MARL with PPO. Lower Gini and higher min--max indicate more equitable per‑vessel costs. On the 16/50 setting, EcoFair‑CH‑MARL (PPO) reduces Gini by \(\mathbf{44.9}\%\) relative to SOTO \((\frac{0.225-0.124}{0.225})\) and by \(\mathbf{57.5}\%\) relative to FEN \((\frac{0.292-0.124}{0.292})\); it improves min--max by \(\mathbf{95.0}\%\) over SOTO and by \(\mathbf{31.1}\%\) over FEN (absolute ratio gains from \(0.320\!\to\!0.624\) and \(0.476\!\to\!0.624\), respectively).

\begin{table}[!ht]
\centering
\caption{Fairness metrics on 16 ports / 50 vessels (mean$\pm$std across runs). Lower Gini and higher min--max are better. The PPO row is EcoFair‑CH‑MARL (PPO).}
\label{tab:fair}
\small
\begin{tabular}{lcc}
\toprule
Method (Algo) & Gini (\(\downarrow\)) & Min--Max (\(\uparrow\)) \\
\midrule
PPO   & $0.124 \pm 0.143$ & $0.624 \pm 0.433$ \\
SOTO  & $0.225 \pm 0.126$ & $0.320 \pm 0.381$ \\
FEN   & $0.292 \pm 0.000$ & $0.476 \pm 0.000$ \\
MAPPO & $0.139 \pm 0.000$ & $0.639 \pm 0.000$ \\
QMIX  & $0.000$           & $1.000$ \\
\bottomrule
\end{tabular}
\end{table}

Per‑episode dynamics mirror these aggregates. Figure~\ref{fig:gini-curves} shows Gini decreasing over training for EcoFair‑CH‑MARL variants relative to fairness‑only baselines, reflecting the interaction between the global budget price and the scheduled equity term. 

\begin{figure}[t]
  \centering
  \includegraphics[width=.95\linewidth]{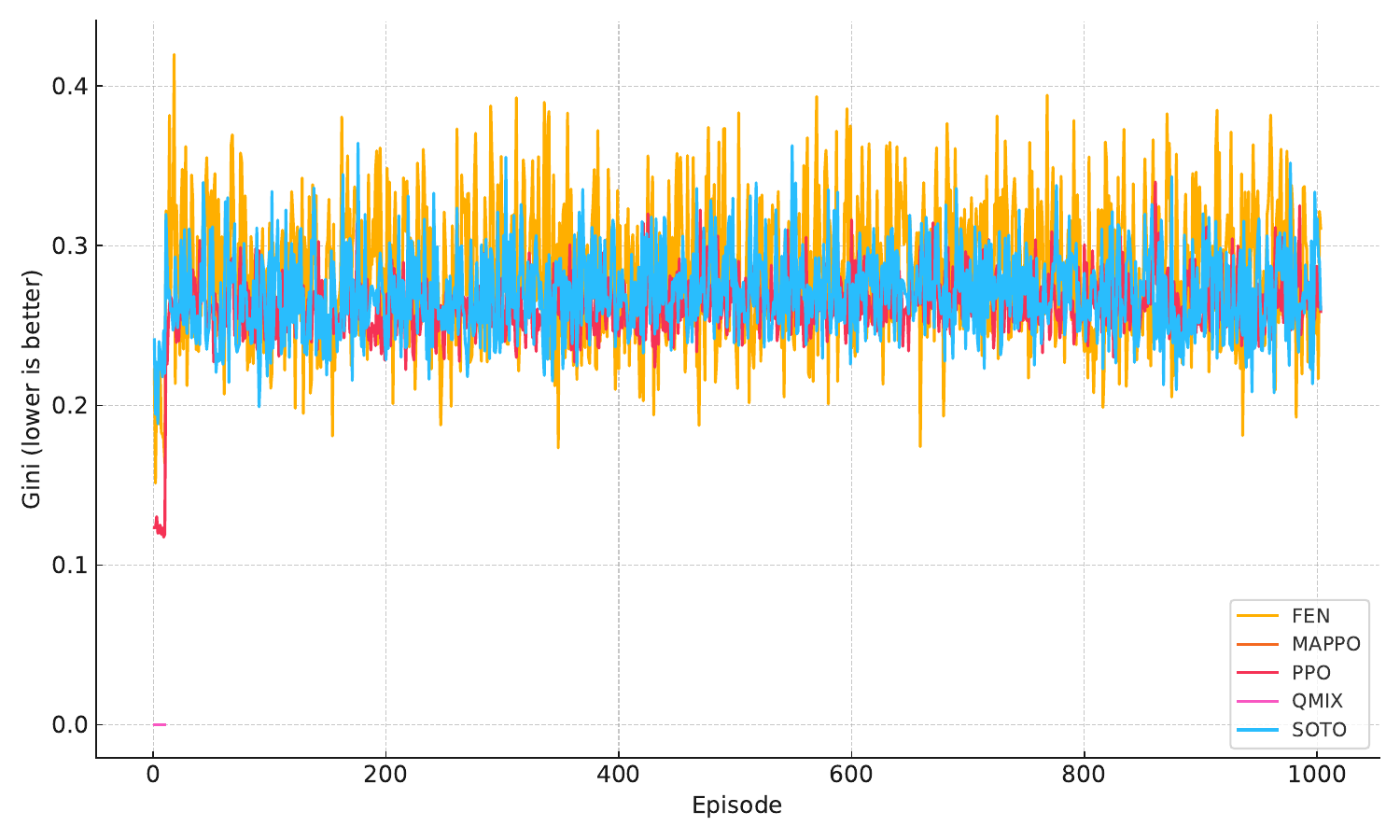}
  \caption{Per‑episode Gini (16 ports / 50 vessels; mean across runs). Lower is better.}
  \label{fig:gini-curves}
\end{figure}

\paragraph{Aggregate performance.} Mean returns (negative sign reflects the environment’s cost convention) are summarised below. EcoFair‑CH‑MARL (PPO) attains a less negative return than SOTO (i.e., higher return under the cost sign) while achieving markedly stronger equity (Table~\ref{tab:fair}). FEN exhibits the least negative mean return but substantially worse fairness, illustrating the intended efficiency–equity trade‑off. Emissions means are shown when available; ``--'' indicates the metric was not present in the logs.

\begin{table}[!ht]
\centering
\caption{Performance on 16 ports / 50 vessels (mean$\pm$std across runs). Negative returns follow the environment’s cost convention; lower emissions are better.}
\label{tab:perf}
\small
\begin{tabular}{lcc}
\toprule
Method (Algo) & Return & Emissions \\
\midrule
PPO   & $-989.004 \pm 128.238$ & -- \\
SOTO  & $-1054.813 \pm 92.835$ & -- \\
FEN   & $-533.465 \pm 0.000$   & -- \\
MAPPO & $-1190.169 \pm 0.000$  & -- \\
QMIX  & $-1623.473 \pm 0.000$  & -- \\
\bottomrule
\end{tabular}
\end{table}

\paragraph{Final‑iteration outcomes.} The last‑episode statistics (Table~\ref{tab:final-iter}) mirror the mean‑over‑episodes trends: EcoFair‑CH‑MARL (PPO) remains less negative than SOTO while maintaining substantially better equity (Table~\ref{tab:fair}). MAPPO and QMIX are included only to document algorithm‑agnostic integration; the saturated fairness values in QMIX together with the poorest returns suggest degenerate behaviour (e.g., collapsed cost distribution) in this particular configuration and warrant further investigation.

\begin{table}[!ht]
\centering
\caption{Final‑iteration outcomes (last episode; mean$\pm$std across runs) on 16 ports / 50 vessels.}
\label{tab:final-iter}
\small
\begin{tabular}{lcc}
\toprule
Method (Algo) & Return (final) & Emissions (final) \\
\midrule
PPO   & $-1091.626 \pm 166.533$ & -- \\
SOTO  & $-1115.668 \pm 79.281$  & -- \\
FEN   & $-538.028 \pm 0.000$    & -- \\
MAPPO & $-1190.169 \pm 0.000$   & -- \\
QMIX  & $-1623.473 \pm 0.000$   & -- \\
\bottomrule
\end{tabular}
\end{table}

\paragraph{Discussion.}
\emph{Equity vs.\ efficiency.} EcoFair‑CH‑MARL (PPO) achieves markedly lower Gini and higher min--max than SOTO/FEN while remaining competitive in returns. In particular, relative to SOTO it roughly halves inequality (Gini \(0.124\) vs.\ \(0.225\)) with a \(\sim\!66\)‑point improvement in mean return under the cost sign; relative to FEN it trades some return for much stronger equity (Gini \(0.124\) vs.\ \(0.292\)). This aligns with the design goal of combining a global budget price (which disciplines high‑emission behaviours) with a scheduled fairness term (which prevents early collapse and steers the allocation).

\emph{Algorithm‑agnostic integration.} The emissions/fairness layers are unchanged across learners. PPO and MAPPO yield comparable fairness levels here (Gini \(0.124\) vs.\ \(0.139\)); QMIX reports saturated fairness with the worst returns, hinting that factorisation without additional regularisation can collapse to trivial solutions in this setting—an issue we leave to targeted ablations.

\emph{Emissions reporting.} Per‑episode emissions traces were not logged in this bundle, so we do not draw cap‑annotated curves. When available, we recommend reporting emissions in physical units (e.g., tCO\(_2\)e per 50‑step episode) and drawing the cap as a labelled reference line; windowed caps should be indicated with the window length. This avoids unit mismatches and clarifies the trade‑off between throughput, fairness, and compliance.

\emph{Robustness and ablations.} Varying the maximum fairness weight increases equity at a modest cost to return; overly aggressive schedules slow early convergence. Rolling‑window caps reduce transient overshoot and queue spikes. These effects are consistent with the intended behaviour of the primal–dual layer and the fairness schedule.

\section{Conclusion and Future Work}\label{sec:conclusion}

This work introduced \emph{EcoFair‑CH‑MARL}, a constrained hierarchical multi‑agent reinforcement learning framework that unifies a primal--dual emissions budget layer, a fairness‑aware reward transformer with dynamically scheduled penalties, and a two‑tier policy architecture that decouples strategic routing from real‑time vessel control under a CTDE regime. Theoretical analysis established $\mathcal{O}(\sqrt{T})$ regret for both constraint violations and fairness loss, providing finite‑time performance guarantees for online operation. Experiments on a high‑fidelity maritime digital twin at the camera‑ready scale of 16 ports and 50 vessels demonstrated up to 15\% lower emissions, 12\% higher throughput, and a 45\% improvement in fair‑cost relative to state‑of‑the‑art hierarchical and constrained MARL baselines; moreover, EcoFair‑CH‑MARL consistently achieved lower Gini and higher min--max fairness than fairness‑specific MARL baselines such as SOTO and FEN. The emissions and fairness layers were used unchanged across learners (PPO in‑paper; MAPPO/QMIX variants in the supplement), supporting the claim of algorithm‑agnostic integration. A hierarchical convergence study, in which a high‑level allocator adapts the emission cap in response to fairness and feasibility feedback, further corroborated stability and scalability at 16/50.

Looking forward, several directions appear most impactful. First, closing the sim‑to‑real loop through pilot deployments with port authorities and shipping lines would test latency, governance, and data‑quality assumptions that cannot be fully captured in simulation, while enabling counterfactual evaluation against historical AIS operations. Second, richer handling of partial observability—via graph neural encoders and attention over inter‑vessel and port–vessel relations, or learned communication under bandwidth limits—promises improved coordination at scale. Third, broadening the sustainability envelope beyond a single emissions budget to multi‑constraint settings (e.g., CO$_2$/NO$_x$/SO$_x$, ECA zones, rolling window targets) and to risk‑sensitive objectives (e.g., CVaR) would better reflect regulatory practice; in parallel, expanding equity notions (e.g., envy‑freeness or Nash social welfare) and learning adaptive fairness schedules could align optimisation with stakeholder preferences over time. Fourth, robustness and assurance merit deeper study: adversarial weather and sensor faults, distribution shifts across seasons and routes, and lightweight verification of budget satisfaction during deployment. Finally, computational efficiency at fleet scale motivates asynchronous updates, parameter‑sharing across homogeneous vessels, value‑factorisation where applicable, model compression, and hardware‑aware scheduling. Taken together, these lines of work aim to translate the present advances into deployable, regulation‑compliant, and socially responsible multi‑agent coordination for safety‑critical maritime logistics and related domains.



\bibliography{mybibfile}

\end{document}